\newcommand\ha{\hat{a}}
\newcommand\hadag{\hat{a}^{\dagger}}
\newcommand\hadaga{\hat{a}^{\dagger} \hat{a}}
\newcommand\ket[1]{\left|\textstyle{#1}\right\rangle}
\newcommand\bra[1]{\left\langle\textstyle{#1}\right|}
 \newtheorem{theorem}{Theorem}
\definecolor{blueRP}{rgb}{0.0, 0.58, 0.71}
\definecolor{blueCh}{rgb}{0.2, 0.4, 0.7}
\begin{document}
\title{Comment on ``Properties and dynamics of generalized squeezed states''}

\author{Rub{\'e}n Gordillo-Hachuel}
\affiliation{Department of Physics, Universidad Carlos III de Madrid, Avda. de la Universidad 30, Legan\'es, 28911  Madrid, Spain}
\affiliation{Department of Electronic Technology, Universidad Carlos III de Madrid, Avda. de la Universidad 30, Legan\'es, 28911  Madrid, Spain}
\author{Ricardo Puebla}
\affiliation{Department of Physics, Universidad Carlos III de Madrid, Avda. de la Universidad 30, Legan\'es, 28911  Madrid, Spain}

\begin{abstract}
    \noindent
    A recent article [\href{https://iopscience.iop.org/article/10.1088/1367-2630/add7fc}{S. Ashhab and M. Ayyash, \textit{New J. Phys. \textbf{27}, 054104} (2025)}] has reported unexpected oscillatory dynamics in generalized squeezed states of order higher than two as their squeezing parameter increases. This behaviour, observed through numerical simulations using truncated bosonic annihilation and creation operators, appeared in several properties of these states, including their average photon number. The authors argued that these oscillations reflect a genuine physical effect. Here, however, we demonstrate that the observed oscillatory behaviour is a consequence of numerical artefacts. A numerical analysis reveals that the oscillations are highly sensitive to the truncation of the Fock basis, indicating a lack of convergence. This is further supported by a theoretical analysis of the Taylor series of the average photon number, suggesting that these generalized squeezed states contain infinite energy after a finite value of the squeezing parameter. Finally, we provide an analytical proof that the average photon number of any generalized squeezed state is a non-decreasing function, thereby ruling out the possibility of intrinsic oscillatory dynamics. We hope these results help clarify the origin of the reported oscillations and highlight the special care required when dealing with high-order squeezing states.
\end{abstract}

\maketitle
\section{Introduction}

Recently, S. Ashhab and M. Ayyash~\cite{Ashhab_2025} have  argued that generalized squeezed states of order higher than two feature oscillatory dynamics, as for example in their average photon number, as a function of generalized higher-order squeezing parameter. This oscillatory behaviour contrasts with the well-known coherent and squeezed states, which exhibit an increasing photon number for increasing first-order and second-order generalized squeezing parameter, respectively.  In Ref.~\cite{Ashhab_2025}, it is asserted that such oscillatory behaviour is physical and does not arise from numerical artefacts. This conclusion is supported by means of numerical simulations with truncated bosonic operators including $N$ Fock states, showing how the frequency of the oscillations remains stable despite an increasing amplitude for growing $N$. 

In this comment we demonstrate that the oscillations are, in fact, a consequence of numerical artefacts, and do not correspond to a genuine physical effect of generalized squeezed states. In Sec.~\ref{s:num}, we begin with a numerical analysis of the average photon number for tri- and quadri-squeezed states (generalized squeezed states of order three and four, respectively), revealing a dramatic change in the oscillatory behaviour when computed in truncated spaces with similar number of Fock basis. This effect indicates lack of numerical convergence beyond some squeezing parameter, which is in agreement with the previously reported divergences of such generalized squeezed states~\cite{Braunstein1987,Hillery1990}. In Sec.~\ref{s:theo}, we investigate the convergence of the Taylor series of the average photon number for $n=3$ and $4$. This analysis  suggests that the reported oscillations take place when such tri- and quadri-squeezed states are ill defined, i.e. for values for which these states contain infinite energy despite being normalizable~\cite{Hillery1990}. We then provide a proof showing that the average photon number of any generalized squeezed state is a non-decreasing function of its high-order squeezing parameter within its radius of convergence. These results show that the reported oscillatory behaviour is a consequence of an artificially truncated Fock basis when the states are divergent, and thus the oscillations cannot be attributed to a physical effect of generalized squeezed states.

\section{Numerical Analysis}\label{s:num}

    The well-known displacement and squeezing operators can be generalized to high-order photon squeezing \cite{Fisher1984,Ashhab_2025}, corresponding to $n$-photon down conversion, 
    \begin{align}
        \hat{U}_n(r) = \exp{\{r \ha^{\dagger, n} - r^* \ha^n\}}.
    \end{align}
    \noindent
    Here, $\ha$ and $\hadag$ are the photon annihilation and creation operators, respectively, so that $[\ha,\hadag]=1$, and $r$ corresponds to the high-order squeezing parameter. Although in general $r\in\mathbb{C}$, for the analysis of average photon number $\langle \hadaga \rangle$, one can restrict without loss of generality to non-negative real values, $r\geq 0$. It is easy to verify that taking $n = 1$, we recover the displacement operator expression, while considering $n = 2$, we obtain a general expression of the two-photon squeezing operator. Our numerical analysis will focus on cases $n > 2$, specifically for $n = 3$ and $n = 4$, as also considered in Ref.~\cite{Ashhab_2025}.

    We start by defining the nth-order squeezed state as, 
    \begin{align}
        \ket{r_n} \equiv \hat{U}_n(r) \ket{0} = e^{r (\ha^{\dagger, n} -  \ha^n)} \ket{0},
    \end{align}
    \noindent
    which has an average photon number, 
    \begin{align}
        \langle \hadaga \rangle_n \equiv \bra{r_n} \hadaga \ket{r_n}.
    \end{align}
    \noindent
    For the cases $n = 3$ and $n = 4$, we refer to the states $\ket{r_3}$ and $\ket{r_4}$ as tri-squeezed and quadri-squeezed states, respectively. Now, we numerically calculate the average photon number of these states, as a function of the squeezing parameter $r$, for different truncation $N$ of the Fock basis. The results for both cases are shown in Figs.~\ref{fig:n_vs_r}(a) $(n=3)$ and \ref{fig:n_vs_r}(b) $(n=4)$, and are consistent with those presented in Ref.~\cite{Ashhab_2025}. For sufficiently small values of $r$, the average photon number for both cases  does not change with the truncation. As $r$ grows, however, the curves for similar truncations match albeit their slope increases with $N$  without appearing to converge to any finite value. This in agreement with~\cite{Hillery1990} where it was shown that after a finite value of $r$, the average photon number diverges. For even higher values of the squeezing parameter, the average photon number begins to oscillate, showing significantly different curves for closely spaced values of $N$. These distinct shapes depend on the parity of $N'=\lfloor (N-1)/n\rfloor$ with $\lfloor x\rfloor=\max \{n\in\mathbb{N}:n\leq x\}$,  revealing  how the enlargement of the Hilbert space affects the system when the  newly added Fock state can be populated, i.e. when $(N-1)/n\in\mathbb{N}$. These behaviours, contrary to what is stated in Ref.~\cite{Ashhab_2025}, strongly suggests that the results in these regions are numerical artefacts lacking physical significance. This includes the average photon number $\langle \hadaga \rangle_n$, as well as the Husimi and Wigner functions. In the next section we will test this hypothesis by performing an analytical analysis of these functions, estimating the radius of convergence and proving that for the physical regime, i.e. for $r$ such that $\langle \hadaga\rangle_n<\infty$, the average photon number is a non-decreasing function of $r$. 

    \begin{figure}[t]
        \centering
        \includegraphics[width=\linewidth]{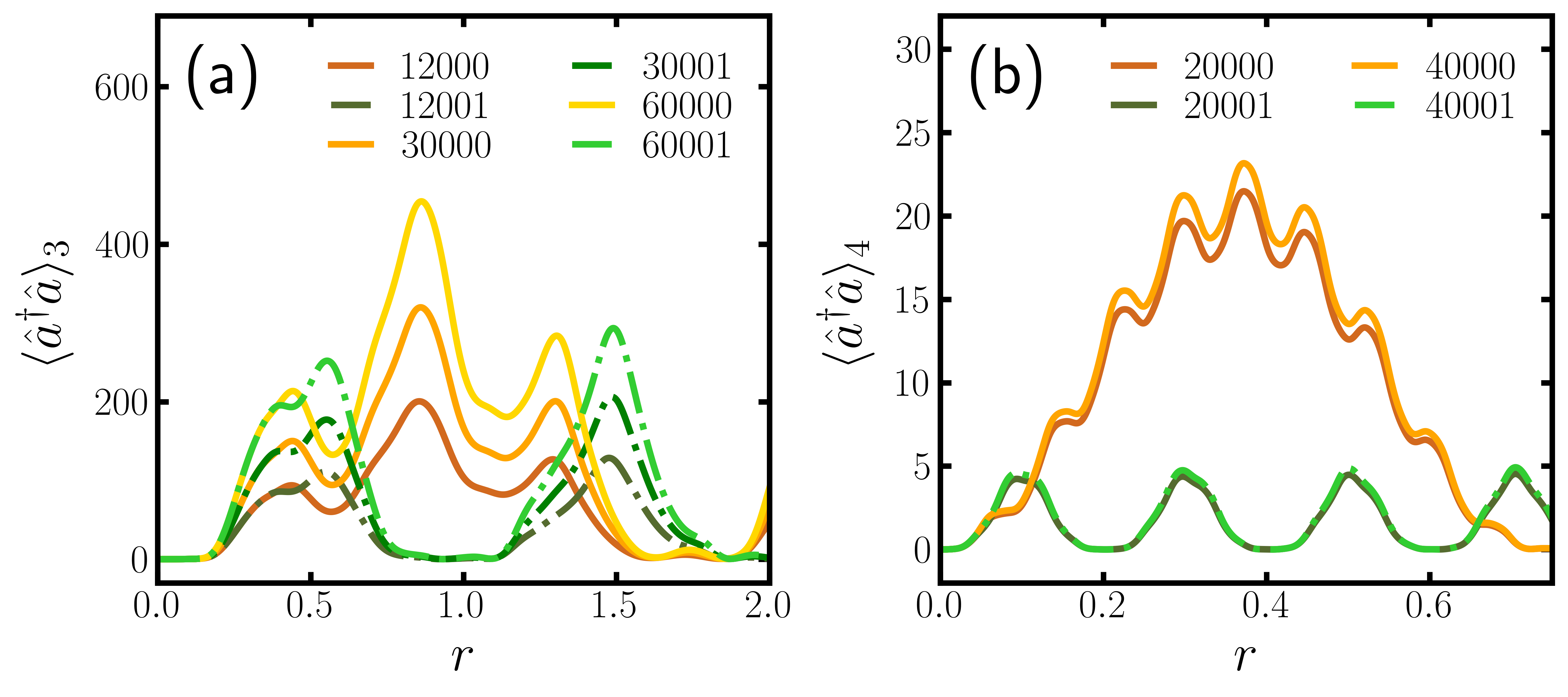}
        \caption{Behaviour of the average photon number $\langle \hadaga \rangle_n$ as a function of the squeezing parameter $r$. Panels (a) and (b) corresponds to the tri-squeezed, $\ket{r_3} = \hat{U}_3(r) \ket{0}$, and  quadri-squeezed, $\ket{r_4} = \hat{U}_4(r) \ket{0}$, respectively. Different curves have been obtained under different truncation $N$ of the Fock basis, as indicated in the labels. Note the significant change of $\langle \hadaga\rangle_n$ for similar $N$ that suggests lack of convergence.}
        \label{fig:n_vs_r}
    \end{figure}

\section{Convergence and monotonic analysis}\label{s:theo}
    
    Let us now show that the oscillations reported in Ref.~\cite{Ashhab_2025}, and reproduced here in Fig. \ref{fig:n_vs_r}, are numerical artefacts as a consequence of the divergent nature of these states for large $r$ values. We  first provide a theoretical analysis of the divergence of the average photon number for tri- and quadri-squeezed states, and then show that for a generic multisqueezed state, $\langle \hadaga\rangle_n$ displays a global minimum at $r=0$ without local maxima as a function of $r$, thereby ruling out the physical significance of the oscillations. 

    \begin{figure}
        \centering
        \includegraphics[width=\linewidth]{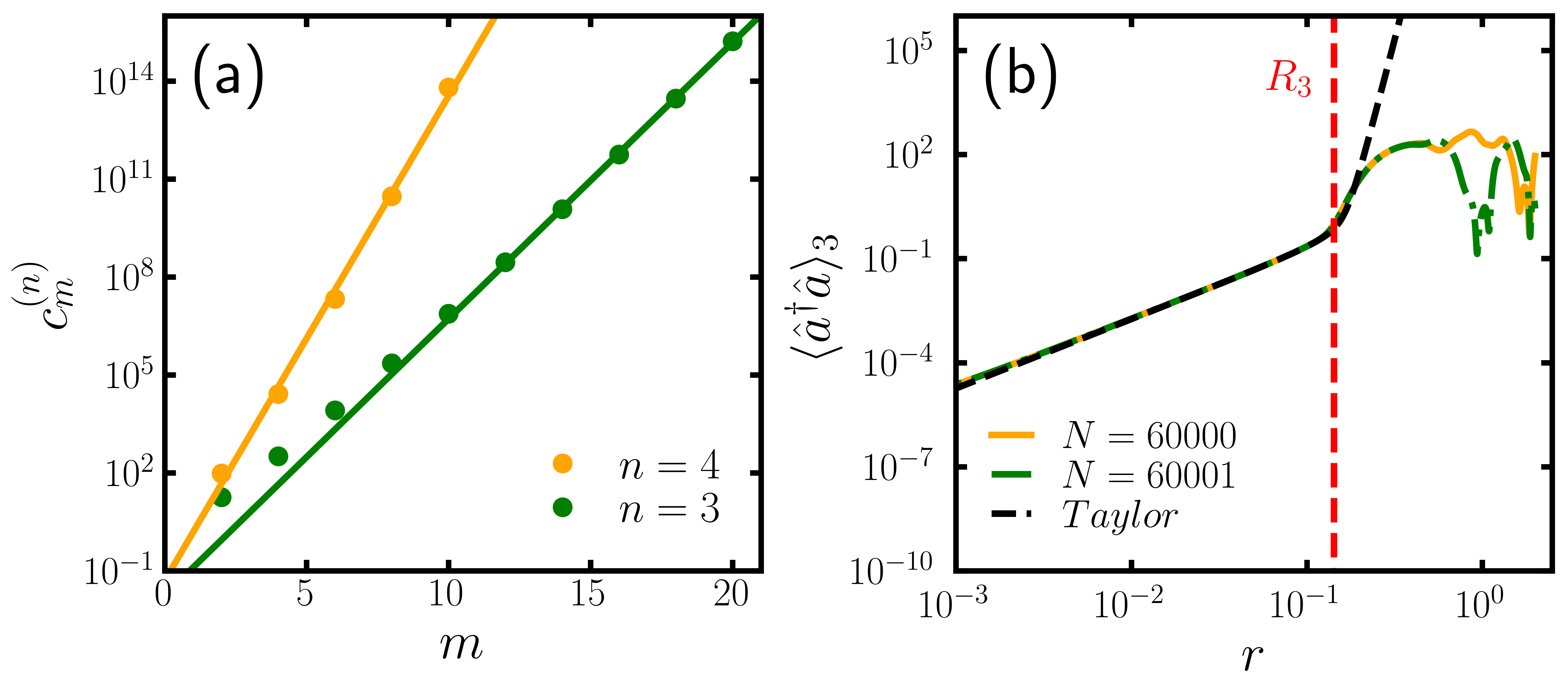}
        \caption{(a) First $M=20$ and $M=10$ coefficients $c_m^{(3)}$ (orange dots) and $c_m^{(4)}$ (green dots) for $n = 3$ and  $n = 4$, respectively, as a function of $m$. Note that, by symmetry, the odd coefficients are zero $c_{2m+1}^{(n)}=0$. An exponential fit for the last five points for each case have been plotted as straight lines. (b) Average photon number for the tri-squeezed state as a function of the squeezing parameter for different approximations. Orange and green lines correspond to a numerical computation using $N=60000$ and $N=60001$ Fock basis, respectively. The dashed black line represents the Taylor series considering the first $M=20$ terms. The dashed red line denotes the estimated radius of convergence $R_3$ for the Taylor series.}
        \label{fig:coefficients}
    \end{figure}
    
    The average photon number of a multisqueezed state can be written as
    \begin{align}
        \langle \hadaga\rangle_n&=\bra{0}\hat{U}_n^\dagger(r)\hadaga \hat{U}_n(r)\ket{0}\\\nonumber &=\bra{0}\left[\sum_{m=0}^\infty \frac{r^m}{m!}[(\hat{a}^{\dagger,n}-\hat{a}^n),\hadaga]_m \right]\ket{0},
    \end{align}
    where $[\hat{A},\hat{B}]_{m+1}=[A,[A,B]_{m}]$ is the nested commutator of order $m$ and $[A,B]_{m=0}=B$. This allows us to write the average photon number as infinite series,
    \begin{align}\label{eq:adaga_m}
    \langle \hadaga\rangle_n=\sum_{m=0}^\infty c_m^{(n)} r^m,
    \end{align}
    where the coefficients $c_m^{(n)}$ result from $c_m^{(n)}=\frac{1}{m!}\bra{0}[(\hat{a}^{\dagger,n}-\hat{a}^n),\hadaga]_m\ket{0}$. For $n \geq 3$, the higher-than-quadratic boson operators prevents a closed expression for Eq.~\eqref{eq:adaga_m}. Yet, one can compute the coefficients $c_m^{(n)}$ up to some order $M$.  In Fig.~\ref{fig:coefficients}(a), we present the first $M=20$ and $M=10$ coefficients $c_m^{(3)}$ and $c_m^{(4)}$ for $n = 3$ and  $n = 4$, respectively, as a function of $m$. Note that, by symmetry, the odd coefficients are zero $c_{2m+1}^{(n)}=0$. The coefficients have been computed using the \textsc{Mathematica}-based software DiracQ~\cite{DiracQ}. As it can observed in Fig.~\ref{fig:coefficients}(a), the coefficients grow exponentially with $m$, i.e. $c_m^{(n)}\propto e^{\alpha_n m}$. A fit, represented as straight lines in the graph, allows us to obtain the coefficients $\alpha_n$, resulting in $\alpha_3=1.95(2)$ and $\alpha_4=3.4(1)$. Therefore, we can estimate the radius of convergence of the average photon number as $R^{-1}_n=\lim_{m\rightarrow \infty} |c_m^{(n)}|^{1/m}=e^{\alpha_n}$, which results in $R_3\approx 0.14$ and $R_4\approx 0.03$. These approximated $R_{3,4}$ also apply to the results shown in  Ref.~\cite{Ashhab_2025}, as we used the same definition for the generalized squeezing operator $\hat{U}_n(r)$. In Fig.~\ref{fig:coefficients}(b) we have represented the average photon number for the tri-squeezed state, using the numerical approximation employing a truncated space, and the Taylor expansion keeping the first $M=20$ terms (cf. Eq.~\eqref{eq:adaga_m}). The radius of convergence $R_3$ is also shown as a vertical line. For values $r>R_3$, the state contains an infinite number of photons, and therefore it becomes non-physical~\cite{Hillery1990}. Due to the divergence for $r>R_n$, results out of the convergence radius using artificially truncated Fock basis or series are misleading numerical artefacts that lack physical meaning, including the Husimi and Wigner functions shown in \cite{Ashhab_2025}. This unphysical regime indicates that the simple interaction assumed to generate the gate $\hat{U}_{3}(r)$ breaks down, as for example requiring to explicitly include the quantized pump field in a $3$-photon downconversion as discussed in Ref.~\cite{Hillery1990}. We note that the oscillations take place in this regime. For $r<R_3$, however, these different approximations to the average photon number match, as it can be seen in Fig.~\ref{fig:coefficients}(b), showing a growing number of photons for increasing $r$. This corresponds to the physical region. Although not explicitly shown, the situation for $n=4$ is equivalent.
    
    We can complement this observation by showing that the average photon number for any multisqueezed state is a non-decreasing function with $r$.  For that, one can compute the second derivative of $\langle \hadaga\rangle_n$ with respect to $r$, which can be written as
    \begin{align}
        \frac{d^2}{dr^2} \langle \hadaga \rangle_n &= 2n \bra{r_n} [\ha^n, \ha^{\dagger, n}] \ket{r_n}.
    \end{align}
    Let us define the commutator in the previous expression as $\hat{A}_n\equiv [\hat{a}^n,\hat{a}^{\dagger, n}]$, that  can be expressed in a compact form as~\cite{Cahill1969}
    \begin{align}
        \hat{A}_n = \sum_{k=1}^{n} k! C_{n,k}^2 (\hat{a}^{\dagger, n-k} \hat{a}^{n-k}), \ {\rm for}\ n\geq 1,
    \end{align}
    \noindent
    where $C_{n,k}$ represents the binomial coefficient, $C_{n,k} = \frac{n!}{k!(n-k)!}$. Since $(\hat{a}^{\dagger, n-k} \hat{a}^{n-k})$ with $n-k \in \mathbb{N}$ is non-negative, the operator $\hat{A}_n$ is strictly positive, $\hat{A}_n>0,\ \forall n\in\mathbb{N}^+$. Note that one can also see that $\hat{A}_n>0$ by expressing the commutator $\hat{A}_n$ in the Fock basis. In this manner, we arrive to
    \begin{align}
        \frac{d^2}{dr^2} \langle \hadaga \rangle_n > 0, \quad \forall r\geq 0, \ \forall n\in \mathbb{N}^+.
    \end{align}
    As the second derivative is a strictly positive quantity, one can ensure that $\langle \hadaga \rangle_n$ does not display local maxima. As commented above, given that $\langle \hadaga\rangle$ is invariant under any boson rotation of the form $\hat{a}\rightarrow e^{i\varphi}\hat{a}\  {\rm for} \ \varphi\in[0,2\pi)$, we can extend this to $|r|$. Therefore, given that $\langle \hadaga\rangle_n$ 
    has a minimum at $|r| = 0$, i.e. $\bra{r_n=0}\hadaga\ket{r_n=0}=0$, it follows that $\langle \hadaga \rangle_n$ displays an absolute minimum at $|r| = 0$, and has no local maxima. This can be formulated as follows.
    
    \begin{theorem} Generalized multisqueezed states $\ket{r_n} = \hat{U}_n(r)\ket{0}$, parametrised by $r\in\mathbb{C}$ with $\hat{U}_n(r)=e^{(r\ha^{\dagger,n}-r^*\ha^n)}$ and $n\in\mathbb{N}^+$, contain an average photon number $\langle \hadaga \rangle_n \equiv \bra{r_n} \hadag \ha \ket{r_n} = f_n(|r|)$, being $f_n(|r|)$  a non-decreasing function in its domain $0\leq |r|<R_n$. 
    \end{theorem}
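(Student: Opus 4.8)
The plan is to reduce the complex squeezing parameter to a real non-negative one, then recast the monotonicity claim as a convexity statement that follows from an operator positivity.

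First I would justify that $\langle\hadaga\rangle_n$ depends on $r$ only through $|r|$, which is what gives the notation $f_n(|r|)$ meaning and lets me work with $r\geq 0$ real. Conjugating by the number-operator rotation $\hat{R}_\varphi=e^{i\varphi\hadaga}$ sends $\ha\mapsto e^{i\varphi}\ha$ and $\hadag\mapsto e^{-i\varphi}\hadag$, so that $\hat{R}_\varphi^\dagger(r\ha^{\dagger,n}-r^*\ha^n)\hat{R}_\varphi=re^{-in\varphi}\ha^{\dagger,n}-r^*e^{in\varphi}\ha^n$. Choosing $\varphi=\arg(r)/n$ turns the generator into $|r|(\ha^{\dagger,n}-\ha^n)$, hence $\hat{R}_\varphi^\dagger\hat{U}_n(r)\hat{R}_\varphi=\hat{U}_n(|r|)$; since $\hadaga$ and $\ket{0}$ are both invariant under $\hat{R}_\varphi$, it follows that $\langle\hadaga\rangle_n(r)=\langle\hadaga\rangle_n(|r|)=f_n(|r|)$. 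From here on $r\geq 0$, and all manipulations are performed inside the disc $|r|<R_n$, where Eq.~\eqref{eq:adaga_m} converges, $f_n$ is real-analytic, and term-by-term differentiation together with finiteness of the relevant matrix elements is guaranteed.

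Next I would compute the first two derivatives of $f_n$. Writing $\hat{G}=\ha^{\dagger,n}-\ha^n$, which is anti-Hermitian and commutes with $\hat{U}_n$, differentiation of $\langle\hadaga\rangle_n=\bra{0}\hat{U}_n^\dagger\hadaga\hat{U}_n\ket{0}$ gives $\frac{d}{dr}f_n=\bra{r_n}[\hadaga,\hat{G}]\ket{r_n}$. Using $[\hadaga,\ha^{\dagger,n}]=n\ha^{\dagger,n}$ and $[\hadaga,\ha^n]=-n\ha^n$ this collapses to $\frac{d}{dr}f_n=n\bra{r_n}(\ha^{\dagger,n}+\ha^n)\ket{r_n}$, and differentiating once more with $[(\ha^{\dagger,n}+\ha^n),\hat{G}]=2[\ha^n,\ha^{\dagger,n}]$ reproduces the stated identity $\frac{d^2}{dr^2}f_n=2n\bra{r_n}\hat{A}_n\ket{r_n}$ with $\hat{A}_n=[\ha^n,\ha^{\dagger,n}]$. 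The core of the argument, and what I regard as the main obstacle, is then the positivity of $\hat{A}_n$ for arbitrary $n$, which I would settle through the Cahill--Glauber expansion quoted above. Acting on a Fock state $\ket{m}$, the factor $\prod_{j=0}^{n-k-1}(\hadaga-j)$ has eigenvalue equal to the falling factorial $\prod_{j=0}^{n-k-1}(m-j)$, which vanishes for $m<n-k$ and is strictly positive for $m\geq n-k$; every summand is therefore a non-negative operator. The $k=n$ term is the empty product times the coefficient $n!\,C_{n,n}^2=n!$, so every Fock-state eigenvalue of $\hat{A}_n$ is at least $n!$ and $\hat{A}_n>0$. Consequently $\bra{r_n}\hat{A}_n\ket{r_n}\geq n!$ and $\frac{d^2}{dr^2}f_n>0$ throughout the domain, i.e. $f_n$ is strictly convex.

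Finally I would close the argument. Because $\ha^n\ket{0}=0$, the first derivative vanishes at the origin, $\frac{d}{dr}f_n\big|_{r=0}=n\bra{0}(\ha^{\dagger,n}+\ha^n)\ket{0}=0$ (equivalently, the vanishing odd coefficients $c^{(n)}_{2m+1}=0$ make $f_n$ even, forcing $f_n'(0)=0$). Strict convexity then makes $\frac{d}{dr}f_n$ strictly increasing, so $\frac{d}{dr}f_n>0$ for every $r>0$; together with $f_n(0)=0$ this identifies $r=0$ as the global minimum and shows $f_n$ to be non-decreasing on $0\leq|r|<R_n$. The only point demanding care beyond the positivity of $\hat{A}_n$ is the convergence bookkeeping: the derivative identities are legitimate only strictly inside the radius of convergence, which is exactly why the statement is restricted to that disc.
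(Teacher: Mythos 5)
Your proof is correct and takes essentially the same route as the paper: reduction to real $r\geq 0$ by rotation invariance, the identity $\frac{d^2}{dr^2}\langle\hadaga\rangle_n = 2n\bra{r_n}[\ha^n,\ha^{\dagger,n}]\ket{r_n}$, and positivity of $\hat{A}_n=[\ha^n,\ha^{\dagger,n}]$ via the Cahill--Glauber expansion. The only differences are presentational: you conclude directly from $f_n'(0)=0$ plus strict convexity, whereas the paper packages the same convexity fact as a contradiction argument ruling out local maxima, and your Fock-basis eigenvalue bound $\hat{A}_n\geq n!$ is a sharper justification of the operator positivity than the paper's brief remark.
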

    \begin{proof}
    Assume  $\langle \hadaga \rangle_n$ to be a non-monotonic function with $|r|$. This means that $\langle \hadaga \rangle_n$ decreases after some value $|\tilde{r}|>0$, reaching a local maximum at $|\tilde{r}|$. However, $\frac{d^2}{d|r|^2}\langle \hadaga\rangle_n >0 \ \forall |r|$, and so also at $|\tilde{r}|$. This rules out the existence of local maxima, therefore contradicts the first assumption, i.e. $\langle \hadaga \rangle_n$ is a monotonic function. Moreover, since  $\langle \hadaga \rangle_n\geq 0, \forall r\in\mathbb{C},\forall n\in\mathbb{N}^+$, being zero for $|r|=0$, implies that $\langle \hadaga\rangle_n$ is a non-decreasing function with $|r|$. 
    \end{proof}

\section{Conclusion}
In a recent article~\cite{Ashhab_2025}, it has been argued that generalized squeezed states of order $n>2$ exhibit intrinsic oscillations, as for example on their average photon number as a function of the squeezing parameter. Such unexpected oscillatory behaviour is absent in the standard coherent and $n=2$ squeezed states. In Ref.~\cite{Ashhab_2025}, these oscillations have been attributed to a genuine physical effect of such high-order generalized squeezed states. Here we show that such oscillatory behaviour is a numerical artefact that arises due to the divergence of photon occupation for a large value of the squeezing parameter. We start with a numerical analysis computing the average photon number under different truncation $N$ of the Hilbert space. This analysis reveals that the oscillations change significantly for similar $N$, already suggesting lack of numerical convergence. A further theoretical analysis allows us to estimate the radius of convergence of the Taylor series of the average photon number  for the $n=3$ and $n=4$ squeezed states, referred to as tri- and quadri-squeezed. Our results show that the oscillations reported in Ref.~\cite{Ashhab_2025} take place in a non-physical region, i.e. where the states contain an infinite number of photons. The oscillations are therefore a consequence of an artificially truncated Hilbert space. Furthermore, we demonstrate that in the physical region, any $n$-order generalized squeezed state  displays an average photon number that is a non-decreasing function with the squeezing parameter, thereby ruling out any oscillatory behaviour. We hope our results help clarify the origin of the reported oscillations and highlight the special care required when dealing with high-order squeezing states.

\acknowledgments
We acknowledge financial support form the Spanish Government via the project TSI-069100-2023-8 (Perte Chip-NextGenerationEU). RP acknowledges the Ram{\'o}n y Cajal (RYC2023-044095-I) research fellowship.

%\bibliography{main}

%apsrev4-2.bst 2019-01-14 (MD) hand-edited version of apsrev4-1.bst
%Control: key (0)
%Control: author (8) initials jnrlst
%Control: editor formatted (1) identically to author
%Control: production of article title (0) allowed
%Control: page (0) single
%Control: year (1) truncated
%Control: production of eprint (0) enabled
%

\end{document}